\definecolor{darkblue}{rgb}{0, 0, 0.5}
\newtheorem{theorem}{Theorem}
\newtheorem{lemma}[theorem]{Lemma}
\newtheorem{corollary}[theorem]{Corollary}
\theoremstyle{definition}
\newtheorem{definition}[theorem]{Definition}
\newtheorem{remark}[theorem]{Remark}
\newlist{compactitem}{itemize}{3}
\setlist[compactitem]{topsep=0pt,partopsep=0pt,itemsep=0pt,parsep=0pt,leftmargin=2em}
\setlist[compactitem,1]{label=\textbullet}
\setlist[compactitem,2]{label=--}
\setlist[compactitem,3]{label=*}
\tikzset{>=stealth}
\tikzset{bg/.style={draw=black!10,fill=black!5}}
\tikzset{num/.style={draw=black!40,fill=black!20}}
\tikzset{leftlabel/.style={anchor=east,outer xsep=6pt}}
\tikzset{abovelabel/.style={above,outer ysep=2pt}}
\tikzset{belowlabel/.style={below,outer ysep=2pt}}
\newcommand{\probclass}[1]{\mathsf{#1}}
\newcommand{\sym}[1]{\texttt{#1}}
\newcommand{\ACC}{\probclass{ACC}}
\newcommand{\TC}{\probclass{TC}}
\newcommand{\NC}{\probclass{NC}}
\newcommand{\DLOGTIME}{\probclass{DLOGTIME}}
\newcommand{\LOGSPACE}{\probclass{L}}
\newcommand{\poly}{\probclass{poly}}
\newcommand{\AHAT}{\probclass{AHAT}}
\newcommand{\SMAT}{\probclass{SMAT}}
\newcommand{\R}{\mathbb{R}}
\newcommand{\rational}[2]{\left\langle #1, #2\right\rangle}
\newcommand{\float}[2]{\left\langle #1, #2\right\rangle}
\newcommand{\intdiv}{\mathbin{/\!/}}
\newcommand{\round}[1]{\textnormal{round}_{#1}}
\newcommand{\mbits}{p}
\newcommand{\abserr}{h}
\newcommand{\relerr}{\eta}
\title{Transformers in $\DLOGTIME$-Uniform $\TC^0$}
\author{%
\name 
David Chiang
\email 
dchiang@nd.edu \\
\addr 
University of Notre Dame
}
\begin{document}

\maketitle

\begin{abstract}
Previous work has shown that the languages recognized by average-hard attention transformers ($\AHAT$s) and softmax-attention transformers ($\SMAT$s) are within the circuit complexity class $\TC^0$.
However, these results assume limited-precision arithmetic: using floating-point numbers with $O(\log n)$ bits (where $n$ is the length of the input string), \citeauthor{strobl2023averagehard} showed that $\AHAT$s can be approximated in $\LOGSPACE$-uniform $\TC^0$, and \citeauthor{merrill-2023-majority} showed that $\SMAT$s can be approximated in $\DLOGTIME$-uniform $\TC^0$.
Here, we improve these results, showing that $\AHAT$s with no approximation, $\SMAT$s with $O(\poly(n))$ bits of floating-point precision, and $\SMAT$s with at most $2^{-O(\poly(n))}$ absolute error are all in $\DLOGTIME$-uniform $\TC^0$.
\end{abstract}

\section{Introduction}

\begin{table}[t]
\caption{Summary of results on transformer encoders in previous work and in this paper. Our results show that even when (average-hard attention or softmax-attention) transformers are computed to very high precision, they remain limited to $\DLOGTIME$-uniform $\TC^0$.}
\label{tab:previous}
\bigskip
\begin{center}
\begin{tabular}{lccc}
\toprule
& attention & approximation & class \\
\midrule
\Citet{merrill-etal-2021-saturated-transformers} & average & $O(\log n)$ precision & non-uniform $\TC^0$ \\
\Citet{Liu-2022-shortcuts} & softmax & $O(\log n)$ precision & non-uniform $\TC^0$ \\
\Citet{strobl2023averagehard} & average & $O(\log n)$ precision & $\LOGSPACE$-uniform $\TC^0$ \\
\Citet{merrill-2022-log} & softmax & $O(\log n)$ precision & $\LOGSPACE$-uniform $\TC^0$ \\
\Citet{merrill-2023-majority} & softmax & $O(\log n)$ precision & $\DLOGTIME$-uniform $\TC^0$ \\
\midrule
This paper, \cref{thm:ahat_fom} & average & none & $\DLOGTIME$-uniform $\TC^0$ \\
This paper, \cref{thm:smat_precision} & softmax & $O(\poly(n))$ precision & $\DLOGTIME$-uniform $\TC^0$ \\
This paper, \cref{thm:smat_approx} & softmax & $2^{-O(\poly(n))}$ error & $\DLOGTIME$-uniform $\TC^0$ \\
\bottomrule
\end{tabular}
\end{center}
\end{table}

Previous work (summarized in \cref{tab:previous}) has shown that the languages recognized by average-hard attention transformers ($\AHAT$s) and softmax-attention transformers ($\SMAT$s) are within the circuit complexity class $\TC^0$.
This places some interesting computational problems beyond the power of these transformers. In particular, if $\TC^0 \ne \NC^1$ (as is often assumed, \citealt{williams:2022}), then these transformers cannot solve any $\NC^1$-complete problems.
For example, consider Boolean formulas with constants $\sym{0}$ and $\sym{1}$ and no variables, like $(\sym{0} \land \lnot \, \sym{1}) \lor (\lnot \, \sym{0} \land \sym{1})$. Checking the \emph{syntax} of such formulas is equivalent to the Dyck language, which is recognizable by both $\AHAT$s \citep{yao-etal-2021-self} and $\SMAT$s \citep{yang-chiang-2024-counting}. But computing the \emph{semantics} of such formulas, that is, deciding whether a formula is true, is $\NC^1$-complete \citep{buss-1987} and therefore not solvable by these transformers (unless $\TC^0 = \NC^1$).

However, these non-solvability results assume limited-precision arithmetic. The best results that we are aware of use floating-point numbers with $O(\log n)$ bits (where $n$ is the length of the input string): \citet{strobl2023averagehard} showed that $\AHAT$s can be approximated in $\LOGSPACE$-uniform $\TC^0$, and \citet{merrill-2023-majority} showed that $\SMAT$s can be approximated in $\DLOGTIME$-uniform $\TC^0$. These results leave open the possibility that $\AHAT$s and $\SMAT$s, as defined on paper using real numbers, might not be subject to the same limitations. Here, we improve these results, showing that: \begin{itemize}
\item $\AHAT$s (without any approximation) are in $\DLOGTIME$-uniform $\TC^0$.
\item $\SMAT$s with $O(\poly(n))$ bits of floating-point precision are in $\DLOGTIME$-uniform $\TC^0$.
\end{itemize}
Furthermore, because there are many different ways to approximate a transformer using limited precision, and different ways appear to lead to different results, we propose an alternative assumption, which is that the final output is approximated up to a certain (absolute) error. Thus, we show:
\begin{itemize}[resume]
\item $\SMAT$s with at most $2^{-O(\poly(n))}$  absolute error are in $\DLOGTIME$-uniform $\TC^0$.
\end{itemize}
This can also be rephrased as a statement, not about the expressivity of approximations of $\SMAT$s, but about the expressivity of exact $\SMAT$s themselves:
\begin{itemize}[resume]
\item Any language that is recognized by a $\SMAT$ with margin $2^{-O(\poly(n))}$ is in $\DLOGTIME$-uniform $\TC^0$.
\end{itemize}

\section{Background}

We write $[n]$ for the set $\{1, 2, \ldots, n\}$. We write $\lfloor x \rfloor$ for the floor of $x$ (greatest integer less than or equal to $x$), and $\lceil x \rceil$ for the ceiling of $x$ (least integer greater than or equal to $x$). We write $O(\poly(n))$ for the family of functions $\bigcup_{k \ge 0} O(n^k)$.

\subsection{Transformers}

We assume familiarity with transformers \citep{vaswani-etal-2017-attention} and describe a few concepts briefly. For more detailed definitions, please see the survey by \citet{strobl-etal-2024-survey}, whose notation and terminology we follow.

In standard attention, attention weights are computed from attention scores using a softmax:
\begin{equation*}
\alpha_{i,j} = [\operatorname{softmax} s_{i,*}]_j = \frac{\exp s_{i,j}}{\sum_{j'} \exp s_{i,j'}}.
\end{equation*}
We call a transformer with standard attention a \emph{softmax-attention} transformer ($\SMAT$).
An \emph{average-hard attention} transformer ($\AHAT$, \citealt{perez2019turing,merrill-etal-2021-saturated-transformers}) is one where the softmax is replaced by:
\begin{equation*}
\operatorname{ahardmax} s_{i,*} = \lim_{\tau \to 0} \operatorname{softmax} s_{i,*}/\tau.
\end{equation*}
In other words, each position $i$ attends to those positions $j$ that maximize the score $s_{i,j}$. If there is more than one such position, attention is divided equally among them.

\emph{Layer normalization} \citep{ba+:2016} scales and shifts the components of a vector to have mean and standard deviation equal to parameters $\gamma$ and $\beta$:
\begin{equation}
\textnormal{LayerNorm}(\mathbf{x}) = \frac{\mathbf{x} - \textnormal{E}[\mathbf{x}]}{\sqrt{\textnormal{Var}[\mathbf{x}] + c}} \odot \gamma + \beta \label{eq:layernorm}
\end{equation}
where $\odot$ is componentwise multiplication and $c \ge 0$ is a constant.
When layer normalization is used, we require that $c > 0$ (as is standard in practice).

We assume that a transformer has a single scalar output, computed from the last position. 
For simplicity, we assume that the output is used for binary classification, as follows:
\begin{definition}
A transformer $T \colon \Sigma^* \to \R$ recognizes a language $L$ if, for every string $w \in \Sigma^*$, if $w \in L$ then $T(w) > 0$, and if $w \not\in L$ then $T(w) < 0$.
\end{definition}

\subsection{Complexity classes}

A $\TC^0$ circuit is one with made from the usual AND, OR, NOT gates, as well as MAJORITY gates, which are true if a strict majority of their inputs are true. A $\TC^0$ circuit family is a set of circuits indexed by lengths $n>0$, such that the circuit for length~$n$ has polynomial size, bounded depth, and unbounded fan-in. $\DLOGTIME$-uniform $\TC^0$ is the set of $\TC^0$ circuit families for which queries about the circuit for length $n$ can be decided in deterministic $O(\log n)$ time. Throughout this paper, whenever we say $\TC^0$, we mean $\DLOGTIME$-uniform $\TC^0$.

The class $\TC^0$ is also the class of languages definable in first-order logic with majority quantifiers ($\text{M} x. \phi(x)$ iff $\phi(x)$ is true for a majority of positions $x$) and the BIT predicate ($\text{BIT}(x,y)$ iff the $y$-th bit of $x$ is~$1$) \citep{Barrington1988OnUW}.
Depending on the context, it may be easier to think about $\TC^0$ in terms of circuits or in terms of logical formulas. Our descriptions of functions in $\TC^0$ abstract away from details of either circuits or formulas, making use of functions already known to be in $\TC^0$ together with the fact that functions in $\TC^0$ are closed under serial and parallel composition \citep{jerabek-2012}.

\begin{theorem} \label{thm:arithmetic}
The following operations on $O(\poly(n))$ bit integers are in $\TC^0$:
\begin{enumerate}[label=(\alph*),ref=\alph*]
\item\label{thm:arithmetic_addition} Addition of two numbers
\item\label{thm:arithmetic_comparison} Comparison of two numbers
\item\label{thm:arithmetic_max} Maximum of $n$ numbers
\item\label{thm:arithmetic_log} Truncated base-$2$ logarithm $\lfloor \log_2 x \rfloor$
\item\label{thm:arithmetic_sum} Iterated addition of $n$ numbers
\item\label{thm:arithmetic_multiplication} Multiplication of two numbers
\item\label{thm:arithmetic_product} Iterated multiplication of $n$ numbers
\item\label{thm:arithmetic_division} Truncated division of two numbers.
\end{enumerate}
\end{theorem}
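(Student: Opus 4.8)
The plan is to handle the eight operations in order of increasing difficulty, since each builds on the previous ones. The first four are definable in first-order logic with the $\predname{BIT}$ predicate, and hence lie in $\AC^0 \subseteq \TC^0$. For addition (\cref{thm:arithmetic_addition}), the carry into bit position $i$ is $1$ exactly when there is a lower position $j \le i$ where both operands have a $1$ (a carry is generated) with no intervening position $k$, $j < k \le i$, where both have a $0$ (the carry is not killed); this is a bounded-quantifier condition on $\predname{BIT}$, and the $i$-th output bit is then the exclusive-or of the two input bits and this carry. Comparison (\cref{thm:arithmetic_comparison}) asks whether the highest differing bit favours the first operand, again first-order. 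Maximum of $n$ numbers (\cref{thm:arithmetic_max}) follows by computing all pairwise comparisons in parallel and outputting a number that dominates every other. Truncated base-$2$ logarithm (\cref{thm:arithmetic_log}) is just the index of the highest set bit, which is first-order in $\predname{BIT}$.

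The remaining operations genuinely require counting. For iterated addition (\cref{thm:arithmetic_sum}) of $n$ numbers, I would compute, for each column $c$, the number of summands having a $1$ in column $c$ using a threshold (majority) gate; each such column total has $O(\log n)$ bits, and combining the shifted column totals can be carried out in constant depth because, after grouping columns into blocks of width $\Theta(\log n)$, carries cross only $O(1)$ blocks. This is the canonical $\TC^0$-complete problem, and I would rely on the standard block construction. Multiplication (\cref{thm:arithmetic_multiplication}) then reduces immediately, since $x \cdot y$ is the iterated addition of at most $n$ shifted copies of $x$, and shifting is first-order.

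The genuine obstacle is iterated multiplication (\cref{thm:arithmetic_product}), and division (\cref{thm:arithmetic_division}) through it. Here recursive doubling only gives $O(\log n)$ depth, and $\DLOGTIME$-uniformity forbids hard-wiring precomputed tables, so the elementary reductions do not suffice. The standard route is to pass to Chinese Remainder Representation: fix polynomially many small primes, reduce each factor modulo each prime (each modulus has $O(\log n)$ bits, so the reductions and the modular products are cheap), multiply within each residue class, and reconstruct the binary value from the residues. Making every step -- the choice of primes and both directions of the binary/CRR conversion -- uniform in $\DLOGTIME$ is precisely the hard theorem of Hesse, Allender, and Barrington, which I would cite rather than reprove. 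Given iterated multiplication, truncated division $\lfloor x/y \rfloor$ follows by approximating $1/y$ with a truncated power series (equivalently a Newton iteration), which is an iterated addition of iterated products, and then correcting the final bit by a constant number of comparisons.
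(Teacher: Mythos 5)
Your proposal is correct and follows essentially the same route as the paper's proof, which likewise treats (\ref{thm:arithmetic_addition})--(\ref{thm:arithmetic_log}) as easy first-order constructions, handles iterated addition and multiplication by the standard column-counting/block argument, and delegates iterated multiplication and division to the theorem of \citet{hesse-etal-2002-division}. The extra detail you supply (carry-lookahead, CRR, Newton iteration) is a faithful expansion of the citations the paper relies on.
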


\begin{proof}
Addition (\ref{thm:arithmetic_addition}) is shown by \citet[Prop.~1.9]{immerman:1999} for $n$ bits and is easy to extend to $O(\poly(n))$ bits. Comparison (\ref{thm:arithmetic_comparison}),  maximum (\ref{thm:arithmetic_max}), and truncated base-$2$ logarithm (\ref{thm:arithmetic_log}) are also easy. These cases do not require majority gates.

Iterated addition (\ref{thm:arithmetic_sum}) is shown, for example, by \citet[Lecture 7, Section 2]{barrington-maciel-2000-advanced}, and multiplication (\ref{thm:arithmetic_multiplication}) is closely related.

Iterated multiplication (\ref{thm:arithmetic_product}) was proven to be in 
$\TC^0$
by \citet[Theorem 5.1]{hesse-etal-2002-division} and can be used for truncated division (\ref{thm:arithmetic_division}).
\end{proof}

\subsection{Approximation error}

We will define various numeric representations and associated concepts as they are needed, but will make use of the following definitions throughout.
\begin{definition}
For functions $\hat{f} \colon X \to \R$ and $f \colon X \to \R$, we say that $\hat{f}$ approximates $f$ with absolute error at most $\epsilon$ if for all $x \in X$, we have $|\hat{f}(x) - f(x)| \le \epsilon$, and $\hat{f}$ approximates $f$ with relative error at most $\epsilon$ if for all $x \in X$, we have $\left|\frac{\hat{f}(x)-f(x)}{f(x)}\right| \le \epsilon$.
\end{definition}

\section{Arbitrary-precision $\AHAT$s}

In this section, we prove that $\AHAT$s without layer normalization, even with arbitrary precision, are in $\TC^0$. We do this by representing rational numbers as pairs of integers. This turns out to only need a polynomial number of bits, so it can be computed in $\TC^0$.

\begin{definition}
A \emph{$p$-bit rational number} is a pair $\rational{a}{b}$, where $a$ is an integer in $[-2^p, 2^p)$ and $b$ is an integer in $[1, 2^p)$. The \emph{value} of $\rational{a}{b}$ is $a/b$.
\end{definition}
(According to this definition, a $p$-bit rational number actually requires $(2p+1)$ bits: $1$ for the sign, $p$ for the numerator, and $p$ for the denominator.)

\begin{lemma} \label{thm:rational_operations}
The following operations on $O(\poly(n))$-bit rational numbers are 
in $\TC^0$:
\begin{enumerate}[label=(\alph*),ref=\alph*]
\item\label{thm:rational_operations_basic} Addition, multiplication, division, and comparison of two numbers
\item\label{thm:rational_product} Iterated multiplication of $n$ numbers
\item\label{thm:rational_operations_iterated} Iterated addition and maximum of $n$ numbers.
\end{enumerate}
\end{lemma}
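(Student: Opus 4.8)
The plan is to reduce every operation on rational numbers to a bounded composition of the integer operations already shown to be in $\TC^0$ in \cref{thm:arithmetic}, while verifying at each step that the numerators and denominators stay within $O(\poly(n))$ bits so that those integer operations remain applicable. Two facts guide the whole argument. First, every denominator lies in $[1, 2^p)$ and is therefore strictly positive, which removes all sign complications from comparisons. Second, I never reduce fractions to lowest terms; this is essential, since $\gcd$ is not known to be in $\TC^0$, so I must instead check that the unreduced representations remain polynomially sized.

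For \cref{thm:rational_operations_basic} the formulas are the familiar cross-multiplications. Writing the inputs as $\rational{a_1}{b_1}$ and $\rational{a_2}{b_2}$, I would output $\rational{a_1 b_2 + a_2 b_1}{b_1 b_2}$ for addition and $\rational{a_1 a_2}{b_1 b_2}$ for multiplication, each computed by integer multiplication (\cref{thm:arithmetic_multiplication}) and, for addition, one integer addition (\cref{thm:arithmetic_addition}). For division I would form $\rational{a_1 b_2}{b_1 a_2}$, negating both components when $a_2 < 0$ to keep the denominator positive. Comparison of $a_1/b_1$ against $a_2/b_2$ reduces, because the denominators are positive, to comparing the integers $a_1 b_2$ and $a_2 b_1$ (\cref{thm:arithmetic_comparison}). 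Since each input has $O(\poly(n))$ bits, each output has at most roughly twice as many, hence still $O(\poly(n))$. Part \cref{thm:rational_product} is equally direct: $\prod_i \rational{a_i}{b_i} = \rational{\prod_i a_i}{\prod_i b_i}$ is computed by two applications of iterated integer multiplication (\cref{thm:arithmetic_product}), and a product of $n$ numbers of $p$ bits each has $O(np) = O(\poly(n))$ bits.

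The substantive case is the iterated addition in \cref{thm:rational_operations_iterated}, since it is the one operation that cannot be done componentwise. Here I would first form the common denominator $B = \prod_i b_i$ by iterated multiplication; it has $O(np)$ bits. Because each $b_i$ divides $B$ exactly, the cofactor $B / b_i$ can be obtained by truncated division (\cref{thm:arithmetic_division}), whose floor coincides with the exact quotient. Computing all $n$ cofactors in parallel, multiplying each by $a_i$, and then summing with iterated integer addition (\cref{thm:arithmetic_sum}) yields the numerator, so the sum is $\rational{\sum_i a_i (B/b_i)}{B}$, again with $O(\poly(n))$ bits. For the maximum I would determine, for each index $k$, whether $a_k b_i \ge a_i b_k$ for all $i$ (the positive-denominator form of $a_k/b_k \ge a_i/b_i$) across all $n^2$ pairs in parallel, then select the least index $k$ that dominates every position and output $\rational{a_k}{b_k}$, breaking ties deterministically.

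I expect the only genuine obstacle to be this iterated addition: the clearing-denominators approach works only because the intermediate product $B$ still has polynomially many bits and because exact division is available as a special case of the truncated division from \cref{thm:arithmetic_division}. Everything else is a routine composition, and closure of $\TC^0$ under serial and parallel composition then delivers the result.
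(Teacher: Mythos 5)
Your proposal is correct and follows essentially the same route as the paper: represent rationals as unreduced integer pairs, use cross-multiplication for the binary operations, and clear denominators via $B=\prod_i b_i$ (with exact cofactors $B/b_i$ from truncated division) for iterated addition. The one small divergence is the maximum, where the paper also converts to the common denominator $B$ and takes $\max_i a_iB/b_i$, while you select the maximizing index by pairwise cross-comparison and output the original pair $\rational{a_k}{b_k}$ --- an equally valid variant that has the minor advantage of not inflating the output to $O(np)$ bits.
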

\begin{proof}
The operations (\labelcref{thm:rational_operations_basic,thm:rational_product}) can be expressed in terms of operations on $O(\poly(n))$-bit integers, which are 
in $\TC^0$
(\cref{thm:arithmetic}):
\begin{align}
\rational{a_1}{b_1} + \rational{a_2}{b_2} &= \rational{a_1b_2+b_1a_2}{b_1b_2} \label{eq:rational_plus} \\
\rational{a_1}{b_1} \times \rational{a_2}{b_2} &= \rational{a_1a_2}{b_1b_2} \label{eq:rational_times} \\
\rational{a_1}{b_1} \div \rational{a_2}{b_2} &= \rational{a_1b_2}{b_1a_2} \label{eq:rational_div} \\
\rational{a_1}{b_1} \le \rational{a_2}{b_2} &\Leftrightarrow a_1b_2 \le b_1a_2 \\
\prod_{i\in[n]} \rational{a_i}{b_i} &= \rational{\prod_{i\in[n]} a_i}{\prod_{i\in[n]} b_i}. \label{eq:rational_prod}
\end{align}

To find the sum or maximum of $n$ rational numbers (\ref{thm:rational_operations_iterated}), we precompute the product of the denominators:
\begin{align}
B &= \prod_{j\in[n]} b_j \notag \\
\sum_{i\in[n]} \rational{a_i}{b_i} &= \rational{\sum_{i\in[n]} a_i B/b_i}{B} \label{eq:rational_sum} \\
\max_{i \in [n]} \rational{a_i}{b_i} &= \rational{\max_{i \in [n]} a_iB/b_i}{B}.
\end{align}
\end{proof}

\begin{lemma} \label{thm:ahat_poly_bits}
Let $T$ be an $\AHAT$ with rational weights, $p$-bit position embeddings, and no layer normalization. Let $L$ be the depth of $T$. Then the computation of $T$ needs $O(pn^L)$ bits for each intermediate and final value.
\end{lemma}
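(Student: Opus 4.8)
The plan is to show that every value arising in the computation of $T$ is a rational number and to bound the number of bits needed to represent it, proving the bound by tracking how the bit count grows as the computation proceeds through the $L$ layers. The starting point is that, because the weights are rational, the position embeddings are $p$-bit rationals, layer normalization (which would introduce a square root through the $\sqrt{\textnormal{Var} + c}$ term) is excluded by hypothesis, and the only remaining nonlinearities are average-hard attention and the feedforward activation (e.g.\ $\operatorname{ReLU}$, which is piecewise-linear), every operation maps rationals to rationals. Thus all values are rational and the content of the lemma is purely the bit-count bound.

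I would set up a per-layer recurrence. Let $B_\ell$ denote a number of bits sufficient to represent every value after $\ell$ layers, so that $B_0 = O(p)$ for the initial word-plus-position embeddings. Inside a single layer I would distinguish two kinds of operations. The \emph{position-local} operations — the linear maps $W_Q, W_K, W_V$ and the feedforward matrices, the residual additions, the activation, and the computation of each score $s_{i,j} = q_i \cdot k_j$ — combine only a constant number $d = O(1)$ of coordinates (the model dimension, number of heads, and feedforward width are all fixed independently of $n$). By \cref{thm:rational_operations}, each addition or multiplication at most doubles the bit count, and combining $O(1)$ terms adds only $O(1)$ bits, so these operations increase the bit count by at most a constant factor together with an additive $O(p)$. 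Finding the maximum score and the argmax set $M_i$ ranges over $n$ positions, but it is only a comparison and selection: the maximum is one of the existing scores, and its value is only used to decide membership in $M_i$, so it does not enlarge the bit count of any propagated value.

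The single operation that aggregates arithmetic over all $n$ positions is the attention output $\frac{1}{|M_i|}\sum_{j \in M_i} v_j$, an iterated addition of up to $n$ value vectors (each of $O(B_\ell + p)$ bits) followed by division by the integer $|M_i| \le n$. By the iterated-addition construction of \cref{thm:rational_operations}, this multiplies the bit count by $O(n)$, giving an attention output of $O(n(B_\ell + p))$ bits. Since $B_\ell \ge p$ (as $B_0 = O(p)$ and bit counts only grow), this is $O(n\,B_\ell)$, and the subsequent position-local steps of the layer preserve this up to a constant factor. Hence $B_{\ell+1} = O(n\,B_\ell)$. Because the depth $L$ and all per-layer constants are fixed independently of $n$, unrolling the recurrence gives $B_L = O(n^L B_0) = O(p\,n^L)$, as claimed.

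The step I expect to require the most care is the per-layer accounting: I must verify that the factor of $n$ arises \emph{only} from the averaging of value vectors in the attention, and that every other component — in particular the feedforward sublayer, the score computations, and the maximum/argmax over scores — either touches a constant number of coordinates or merely selects among existing values, so that it contributes a constant factor rather than a factor of $n$. Getting this classification right is exactly what keeps the base of the exponent equal to $n$ and prevents the bound from degrading to something like $n^{cL}$.
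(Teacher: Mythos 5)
Your proposal is correct and follows essentially the same argument as the paper: induction over the $L$ layers, observing that the constant-arity (position-local) operations on rationals grow the bit count by only a constant factor, while the iterated sum over up to $n$ attended positions multiplies it by $O(n)$, yielding $O(pn^L)$ after $L$ layers. Your explicit note that the max/argmax over scores is used only for comparison and selection (and so does not enlarge any propagated value) is a point the paper leaves implicit, but the overall route is the same.
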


\begin{proof}
First, note that if $\rational{a_1}{b_1}$ uses $O(n^k)$ bits and $\rational{a_2}{b_2}$ uses $O(n^k)$ bits, then their sum, product, and quotient (\cref{eq:rational_plus,eq:rational_times,eq:rational_div}) also use $O(n^k)$ bits. But if $\rational{a_i}{b_i}$ for $i \in [n]$ use $O(n^k)$ bits each, then their sum (\cref{eq:rational_sum})
uses $O(n^{k+1})$ bits.

We prove the lemma by induction on $L$.
If $L = 0$, we just look up the embeddings, which need $O(p)$ bits per value.
If $L > 0$, assume that layer $(L-1)$ required $O(pn^{L-1})$ bits per value.
In the self-attention, the queries, keys, values, and scores need $O(pn^{L-1})$ bits. 
The sum of the maximum-scoring values, which there could be up to $n$ of, needs $O(pn^L)$ bits, as does the average. Finally, the activations of the FFNN also need $O(pn^L)$ bits.
\end{proof}

\begin{theorem} \label{thm:ahat_fom}
Let $T$ be an $\AHAT$ with rational weights, $O(\poly(n))$-bit position embeddings, and no layer normalization. Then the language recognized by $T$ is 
in $\TC^0$.
\end{theorem}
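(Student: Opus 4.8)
The plan is to carry the entire computation of $T$ in exact rational arithmetic and to show that each layer is a $\TC^0$ function of the previous one; since $T$ has constant depth, closure of $\TC^0$ under composition then finishes the argument. The enabling observation is \cref{thm:ahat_poly_bits}: because the position embeddings use $p = O(\poly(n))$ bits and the depth $L$ is a fixed constant, every intermediate and final value uses $O(pn^L) = O(\poly(n))$ bits. Thus each value is an $O(\poly(n))$-bit rational number, and all the rational operations of \cref{thm:rational_operations} are available to us in $\TC^0$.

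First I would dispatch the operations common to every layer. The embedding layer is a lookup of the rational word and position embeddings. Every affine map in the network---the query, key, value, and output projections, the residual additions, and the two linear maps of the feed-forward network---is a matrix--vector product, that is, an iterated addition of pairwise products of rationals (\labelcref{thm:rational_operations_basic,thm:rational_operations_iterated}); the attention scores $s_{i,j}$ are dot products of the same form. I assume, as is standard for $\AHAT$s, that the feed-forward activation is piecewise-linear (e.g.\ $\mathrm{ReLU}(x) = \max(x, 0)$), so that it is a comparison-and-select and in particular keeps all values rational.

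The one transformer-specific ingredient is the average-hard attention, which I expect to be the main point requiring care, since it is where score ties must be resolved exactly. For a fixed query position $i$ I would compute the maximum score $m_i = \max_j s_{i,j}$ (maximum of $n$ rationals, \labelcref{thm:rational_operations_iterated}); test $s_{i,j} = m_i$ for each $j$ by comparison (\labelcref{thm:rational_operations_basic}); form the masked sum $\sum_j [\,s_{i,j} = m_i\,]\, v_j$ by a selection followed by an iterated addition; and divide by the number $k_i = \#\{j : s_{i,j} = m_i\}$ of maximizers, itself an iterated addition of indicators. The quotient $\frac{1}{k_i} \sum_j [\,s_{i,j} = m_i\,]\, v_j$ is exactly $\operatorname{ahardmax}$, dividing attention equally among the maximizing positions, and uses only $\TC^0$ rational arithmetic; \cref{thm:ahat_poly_bits} guarantees that the summed and averaged values still fit in $O(\poly(n))$ bits.

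It remains to read off the scalar output $T(w)$ at the last position and decide $T(w) > 0$, a single comparison against $\rational{0}{1}$ and hence in $\TC^0$. Because $T$ has constant depth, the whole computation is a constant-depth composition of $\TC^0$ functions, which is again in $\TC^0$. $\DLOGTIME$-uniformity is preserved provided the fixed rational weights and the position embeddings (as functions of position and length) are themselves $\TC^0$-computable, which I take to be implicit in the assumption that the embeddings are available as $O(\poly(n))$-bit rationals. Therefore the language recognized by $T$ is in $\TC^0$.
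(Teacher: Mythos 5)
Your proposal is correct and follows essentially the same route as the paper: it invokes \cref{thm:ahat_poly_bits} to bound all intermediate values by $O(\poly(n))$ bits and \cref{thm:rational_operations} to carry out every operation (including the max/equality-test/masked-sum/divide-by-count realization of average-hard attention) exactly in $\TC^0$, then concludes by closure under constant-depth composition. The paper states this more tersely, but your expanded layer-by-layer accounting adds no new ingredient beyond what its two-sentence proof already relies on.
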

\begin{proof}
$\AHAT$s use only the operations in \cref{thm:rational_operations} on rational numbers with $O(\poly(n))$ bits (\cref{thm:ahat_poly_bits}). Since these operations are all computable in $\TC^0$ and can be composed in $\TC^0$, the language recognized by $T$ is in $\TC^0$.
\end{proof}

\begin{remark}
We now have a more or less complete characterization of which regular languages can be recognized by $\AHAT$s.
\Citet{barrington-etal-1992-rl-nc1} showed that every regular language $L$ is either in $\ACC^0$ or $\NC^1$-complete. 
\begin{itemize}
\item If $L$ is in $\ACC^0$, then it can be defined in linear temporal logic with modular counting \citep{baziramwabo-etal-1999}, and therefore it can be recognized by an $\AHAT$ with suitable position encodings \citep{barcelo2023logical}. 
\item If $L$ is $\NC^1$-complete, then by \cref{thm:ahat_fom} it cannot be recognized by an $\AHAT$ unless $\TC^0 = \NC^1$.
\end{itemize}
\end{remark}

\section{Polynomial-precision $\SMAT$s}

Next, we turn to $\SMAT$s, extending \citeauthor{merrill-2023-majority}'s proof from $O(\log n)$ bits to $O(\poly(n))$ bits. 

\begin{definition} \label{def:float}
A \emph{$\mbits$-bit floating-point number} is a pair $\float{m}{e}$ where $m$ (called the \emph{significand}) and $e$ (called the \emph{exponent}) are integers, $|m| \in \{0\} \cup [2^{p-1}, 2^p)$, and $e \in [-2^p, 2^p)$.
The \emph{value} of $\float{m}{e}$ is $m \cdot 2^e$. 
We write $\round{p}(x)$, where $x$ is either a real number or a floating-point number, for the $p$-bit floating-point number nearest to $x$. If there are two such numbers, we call $x$ a \emph{breakpoint} and define $\round{p}(x)$ to be the one with an even significand.
\end{definition}
(According to this definition, a $p$-bit floating-point number actually requires $(2p+2)$ bits: $(p+1)$ for the significand and its sign, and $(p+1)$ for the exponent and its sign.)

To compute a $\SMAT$ with $p$-bit floating-point numbers means to approximate the operations in the $\SMAT$ with operations on floating-point numbers. In typical floating-point implementations, addition, multiplication, division, and square root are rounded to the nearest floating-point number, but $\exp$ is only approximated with a relative error of about $2^{-p}$. We also assume that summation of $n$ numbers is performed exactly and then rounded (following \citealt{Liu-2022-shortcuts,chiang-2023-tighter,merrill-2022-log}; but \emph{pace} \citet{li-etal-2024-empowers}, who argue that rounding should be performed after each addition).

\begin{lemma} \label{thm:float_operations}
The following operations on floating-point numbers with $p \in O(\poly(n))$ bits are computable in $\TC^0$, with exact rounding to the nearest $p$-bit floating-point number:
\begin{enumerate}[label=(\alph*),ref=\alph*]
\item\label{thm:float_basic} Addition, multiplication, division, and comparison of two numbers
\item\label{thm:float_product} Iterated multiplication of $n$ numbers.
\end{enumerate}
\end{lemma}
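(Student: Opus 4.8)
The plan is to reduce every floating-point operation to a bounded number of integer operations on the significands and exponents, each on $O(\poly(n))$-bit integers and hence in $\TC^0$ by \cref{thm:arithmetic}, and to isolate all rounding into a single reusable subroutine. This subroutine takes an integer $M$ and a scale $e$, together representing the exact value $M \cdot 2^e$ (where $M$ may carry many more than $p$ significant bits), and returns $\round{p}(M \cdot 2^e)$. It first locates the leading bit by computing $\ell = \lfloor \log_2 |M| \rfloor$ (truncated logarithm), extracts the top $p$ bits as a tentative significand, and reads off the \emph{round} bit (bit $\ell - p$) and a \emph{sticky} bit (the logical OR of all lower bits); each bit extraction is a truncated division by a power of two followed by a comparison. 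Round-to-nearest, ties-to-even is then a finite case analysis on the round bit, the sticky bit, and the parity of the tentative significand, with a final adjustment (shift right, increment the exponent) in case rounding up carries the significand up to $2^p$.

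Given this subroutine, most operations are immediate. Comparison needs no rounding at all: since nonzero significands are normalized to $[2^{p-1}, 2^p)$, two floats of the same sign are ordered lexicographically by exponent and then significand, so comparison reduces to integer comparisons and a sign test, and uses no majority gates. For multiplication I would form the integer product $M = m_1 m_2$ and the integer sum $e = e_1 + e_2$, then hand $M \cdot 2^e$ to the rounding subroutine; here $M$ has at most $2p$ bits, so everything stays polynomial.

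Addition is the delicate case, and I expect the exponent alignment to be the main obstacle. The naive approach aligns both operands to a common scale, but the exponents range over $[-2^p, 2^p)$, so the gap $d = |e_1 - e_2|$ can be as large as $2^{p+1}$, and shifting a significand by $d$ places would create a number with exponentially many bits, defeating the polynomial bound. The fix is to observe that once $d$ exceeds $p$ by a small constant, the smaller operand has magnitude below half a unit in the last place of the larger one, so it cannot affect the rounded result and the answer is just the larger operand; this needs a little care at the ties-to-even boundary and at the significand boundary $2^{p-1}$, where the spacing just below the larger operand halves. When $d$ is within $O(p)$, I would shift the larger significand up by $d$, take the signed integer sum of the two aligned significands (an $O(p)$-bit integer), record a sticky bit for any nonzero bits pushed past the retained window, and invoke the rounding subroutine. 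Cancellation between opposite-signed operands of nearly equal magnitude is harmless: the sum is then a small integer that the subroutine left-normalizes using the truncated logarithm, lowering the exponent accordingly.

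Division and iterated multiplication follow the same template. For division I would compute $q = \lfloor m_1 2^{k} / m_2 \rfloor$ with $k = p + O(1)$ guard bits by truncated division, fold a nonzero-test on the remainder into the sticky bit, set the scale to $e_1 - e_2 - k$, and round. For iterated multiplication of $\float{m_1}{e_1}, \dots, \float{m_n}{e_n}$, I would compute $M = \prod_{i} m_i$ by iterated integer multiplication and $E = \sum_i e_i$ by iterated integer addition; here $M$ has $O(np) = O(\poly(n))$ bits, still within range, so one application of the rounding subroutine to $M \cdot 2^E$ completes the proof, with the case that some $m_i = 0$ (hence product $0$) handled separately.
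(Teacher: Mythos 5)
Your proposal is correct and follows essentially the same route as the paper: reduce each operation to integer arithmetic on significands and exponents (so that \cref{thm:arithmetic} applies), and centralize rounding in a normalize-and-round subroutine built from the truncated logarithm plus guard/round/sticky bits, with iterated multiplication handled by one iterated integer product of the significands and one iterated sum of the exponents. The only cosmetic difference is in addition, where the paper absorbs an arbitrarily large exponent gap uniformly into the sticky bit of its $\intdiv$ operation (shifting the smaller operand down) rather than shifting the larger operand up and splitting into your near/far case analysis; both correctly avoid the exponential-length shift.
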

\begin{proof}
These operations on $O(\poly(n))$-bit integers are
in $\TC^0$
(\cref{thm:arithmetic}). We just have to show that they are also definable on floating-point numbers. 
This is not a new result, but we try to fill in some details here that are missing elsewhere.

First, $\round{p}(\float{m}{e})$ can be computed in $\TC^0$ as follows: Count the number of significand bits $q = \lfloor \log_2 |m| \rfloor + 1$ (\cref{thm:arithmetic}\ref{thm:arithmetic_log}), shift $m$ right by $(q-p)$ bits, and increment $e$ by $(q-p)$. Round $m$ to the nearest integer, and if $|m| = 2^p$, shift $m$ and increment $e$ once more.
For the operations (\ref{thm:float_basic}), we have
\begin{align*}
\float{m_1}{e_1} + \float{m_2}{e_2} &= \begin{cases}
\round{\mbits}(\float{m_1 + m_2 \intdiv 2^{e_1-e_2}}{e_1}) & \text{if $e_1 \ge e_2$} \\
\round{\mbits}(\float{m_1 \intdiv 2^{e_2-e_1} + m_2}{e_2}) & \text{if $e_1 \le e_2$}
\end{cases} \\
\float{m_1}{e_1} \times \float{m_2}{e_2} &= \round{\mbits}(\float{m_1m_2}{e_1+e_2}) \\
\float{m_1}{e_1} \div \float{m_2}{e_2} &= 
\round{\mbits}(\float{m_1 \cdot 2^{p-1} \intdiv m_2}{e_1-e_2-p+1}) \\
\float{m_1}{e_1} \le \float{m_2}{e_2} &\Leftrightarrow \begin{cases}
m_1 \le m_2 \intdiv 2^{e_1-e_2} & \text{if $e_1 \ge e_2$} \\
m_1 \intdiv 2^{e_2-e_1} \le m_2 & \text{if $e_1 \le e_2$.}
\end{cases}
\end{align*}
The operation $\intdiv$ is defined as
\begin{align*}
  a \intdiv b &= \begin{cases}
    a/b & \text{if $a/b$ is a multiple of $1/4$} \\
    a/b + 1/8 & \text{otherwise.}
  \end{cases}
\end{align*}
The result has three fractional bits (called the \emph{guard}, \emph{round} and \emph{sticky} bits), which ensure that the result is correctly rounded to the nearest floating point number \citep[]{goldberg:2017}. Note that this can be computed efficiently even if $b$ is a large power of $2$.

For iterated multiplication (\ref{thm:float_product}), we have
\begin{align*}
  \prod_{i \in [n]} \float{m_i}{e_i} &= \round{\mbits}\left(\float{\prod_{i \in [n]} m_i}{\sum_{i \in [n]} e_i}\right).
\end{align*}
\end{proof}

\goodbreak

\begin{lemma} \label{thm:sum_polyn}
  Iterated addition of $n$ floating-point numbers, each with $p \in O(\poly(n))$ bits, is 
  in $\TC^0$.
\end{lemma}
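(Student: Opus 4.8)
The plan is to reduce iterated floating-point addition to iterated integer addition (\cref{thm:arithmetic}\ref{thm:arithmetic_sum}), which is already in $\TC$, followed by a single rounding step. Writing the summands as $\float{m_i}{e_i}$ for $i \in [n]$, I would first discard the zero summands and reduce to computing $\round{\mbits}\bigl(\sum_i m_i 2^{e_i}\bigr)$. The naive strategy---shift every significand into a common fixed-point format and add---does not work here, and this is exactly where the step from $O(\log n)$ to $O(\poly(n))$ bits bites: in the $O(\log n)$ regime the exponents fit in a polynomial range, so the aligned sum has polynomial width, but now each exponent has $O(\poly(n))$ bits, so the exponents range over an interval of width $2^{O(\poly(n))}$ and a fixed-point representation of the exact sum would be exponentially wide. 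We must therefore never materialize the exact sum; we only extract its $\mbits + O(\log n)$ leading bits together with a sticky bit, which is all that correct rounding requires.

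The key structural observation is that each nonzero summand $m_i 2^{e_i}$ is supported on an interval of exactly $\mbits$ consecutive bit positions, namely $[e_i, e_i + \mbits)$. I would group the summands into \emph{blocks} by merging two intervals whenever the gap between them is at most $2\lceil \log_2 n\rceil + 2$; each block is then a maximal chain of at most $n$ such intervals and so spans a window of only $O(n\mbits) = O(\poly(n))$ bit positions. Identifying the blocks (sorting the $e_i$ and locating the large gaps) is in $\TC$, and the exact sum of the summands lying inside one block is an iterated addition of at most $n$ signed $O(\poly(n))$-bit integers, hence computable exactly in $\TC$ by \cref{thm:arithmetic}. This yields, for each block, a polynomial-width integer value placed at a known (but possibly exponentially large) offset.

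The payoff of the gap threshold is a dominance property. A crude magnitude bound---each block aggregates at most $n$ summands whose significands are below $2^{\mbits}$, and consecutive blocks are separated by more than $2\log_2 n$ zero positions---shows that all blocks strictly below a given block contribute less than the weight of that block's lowest bit. Consequently the most significant bit of the total agrees with that of the topmost nonzero block up to a possible one-position drop caused by a borrow, and the sign of the total equals the sign of that block except when the lower blocks sum to a value that cancels its least significant bit. I would then locate the leading bit position $\mu$ from the topmost block and its borrow-in bit, read off the bits of the total in the window $[\mu - \mbits, \mu]$ by adding together the $O(1)$ blocks that meet this polynomial-width window (plus one borrow-in bit), and set the sticky bit to the OR, over all blocks lying strictly below the window, of whether the block value is nonzero, combined with the low bits of any straddling block. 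Feeding the sign, the $\mbits+1$ extracted bits, and the sticky bit into the rounding routine of \cref{thm:float_operations} produces the correctly rounded $\mbits$-bit result, and every step composes in $\TC$.

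I expect the main obstacle to be precisely the interaction of exponential dynamic range with cancellation: cancellation can pull the leading bit of the sum arbitrarily far below $\max_i e_i$, so one cannot round within a fixed window anchored at the largest exponent, and it is the block decomposition together with the dominance bound that tames this. The most delicate part of the write-up will be the rounding logic across a block boundary---getting the guard, round, and sticky bits right when a net borrow from the lower blocks propagates up into the leading block---but each such case is a finite computation on $O(\poly(n))$-bit integers and therefore remains in $\TC$.
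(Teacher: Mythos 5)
Your proposal is correct and follows essentially the same route as the paper's proof: both partition the summands into blocks by exponent gaps sized so that each block spans only $O(\poly(n))$ bit positions and can be summed exactly by iterated integer addition, while a dominance bound shows the blocks below the leading nonzero one contribute too little to matter. The only divergence is the final rounding step---you extract a $(p{+}1)$-bit window together with guard/round/sticky information, whereas the paper argues more directly that the leading block-sum already rounds to the correct answer unless it is a breakpoint, in which case the sign of the second block-sum breaks the tie---but both versions of that step go through.
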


\begin{proof}
 \newcommand{\simstar}{\mathbin{\overset{*}{\sim}}}

  We are given $\mbits$-bit floating-point numbers $\float{m_1}{e_1}, \ldots, \float{m_n}{e_n}$. Without loss of generality, assume $m_i \ne 0$. We need to compute the sum \[ s = \round{\mbits}\left(\sum_{i \in [n]} \float{m_i}{e_i}\right).\] 

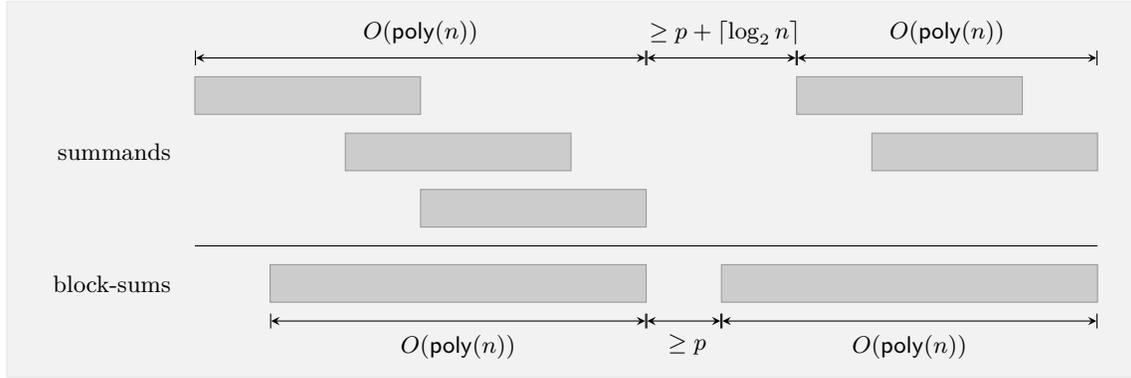
\begin{figure}
\begin{center}
\small
\begin{tikzpicture}[y=0.5cm]
\draw[bg] (-2.5,-4) rectangle (12.5,6);
\node[leftlabel] at (0,2) {summands};
\draw[num] (0,3) rectangle (3,4);
\draw[num] (2,1.5) rectangle (5,2.5);
\draw[num] (3,0) rectangle (6,1);
\draw[num] (8,3) rectangle (11,4);
\draw[num] (9,1.5) rectangle (12,2.5);
\draw[|<->|] (0,4.5) to node[abovelabel] {$O(\poly(n))$} (6,4.5);
\draw[|<->|] (6,4.5) to node[abovelabel] {${} \ge p + \lceil \log_2 n \rceil$} (8,4.5);
\draw[|<->|] (8,4.5) to node[abovelabel] {$O(\poly(n))$} (12,4.5);
\node[leftlabel] at (0,-1.5) {block-sums};
\draw (0,-0.5) -- (12,-0.5);
\draw[num] (1,-2) rectangle (6,-1);
\draw[num] (7,-2) rectangle (12,-1);
\draw[|<->|] (1,-2.5) to node[belowlabel] {$O(\poly(n))$} (6,-2.5);
\draw[|<->|] (6,-2.5) to node[belowlabel] {${} \ge p$} (7,-2.5);
\draw[|<->|] (7,-2.5) to node[belowlabel] {$O(\poly(n))$} (12,-2.5);
\end{tikzpicture}
\end{center}
\caption{Overview of algorithm for iterated addition of $p$-bit floating-point numbers. The summands are grouped into blocks that each span $O(\poly(n))$ bits. They are separated by at least $p+\lceil\log_2 n\rceil$ bits, so that the block-sums are separated by at least $p$ bits.}
\label{fig:sum}
\end{figure}

  \emph{Step 1.} Define the relation $i \sim j$ just in case $|e_i - e_j| < 2\mbits + \lceil \log_2 n \rceil$. The transitive closure of $\sim$ partitions the (indices of the) summands into blocks $B_1, \ldots, B_k \subseteq [n]$ (that is, if $i \sim j$, then $i$ and $j$ are in the same block). 
  The intuition (\cref{fig:sum}) is that, in the binary representation, the numbers within each block are close enough together that we can sum them by brute force, while numbers in different blocks are far enough apart that we can ignore all but the two leftmost blocks.

  The partitioning into blocks can be computed in $\TC^0$ as follows.
  Call $i \in [n]$ \emph{block-minimal} iff there is no $j \in [n]$ such that $\float{m_j}{e_j} < \float{m_i}{e_i}$ and $i \sim j$.
  Then $i$ and $j$ belong to the same block if and only if there is no block-minimal $k \in [n]$ such that $e_i < e_k \le e_j$ or $e_j < e_k \le e_i$.

  \emph{Step 2.} For each block $B$, we compute the sum of all the numbers in $B$. Let $e$ be the minimal exponent in $B$ (that is, $e = \min_i \{e_i \mid i \in B\}$). Since all the exponents in $B$ are bigger than $e$ by at most $n(2\mbits+\lceil\log_2 n\rceil) \in O(\poly(n))$, we can perform this sum exactly \citep{merrill-2022-log}:
  \begin{align*}
  \sum_{i \in B} \float{m_i}{e_i} %
  &= \float{\sum_{i \in B} m_i \cdot 2^{e_i-e}}{e}.
  \end{align*}
  We've left the block-sums unnormalized; that is, their significands could have more or less than $p$ bits.

  \emph{Step 3.} Let $s^{(i)} = \float{m^{(i)}}{e^{(i)}}$ be the sum of the block with the $i$-th largest absolute sum.   
  Then the first block-sum $s^{(1)}$ dominates the whole sum;
  any number not in the first block has absolute value less than $\float{2^p}{e^{(1)} - 2p - \lceil \log_2 n \rceil}$. So we can bound the rest of the sum as:
  \begin{align}
  r = \sum_{i=2}^k s^{(i)} %
  &< n \cdot 2^\mbits \cdot 2^{e^{(1)}-2p-\lceil \log_2n \rceil} \le 2^{e^{(1)}-\mbits}.  \label{eq:bound1}
  \end{align}
  In other words, in the binary representation, there is a gap of at least $p$ zero bits between the first block-sum and the remaining block-sums.

  It's not necessary to sort all the block-sums; it's enough to find the maximal block-sum~$s^{(1)}$ and the second block-sum $s^{(1)}$. 
  Then we consider three cases (see \cref{fig:sum_cases}).

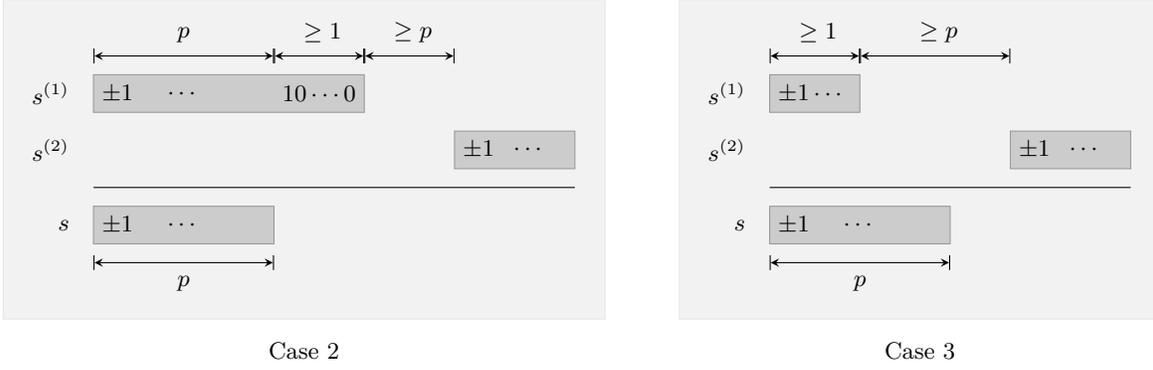
\begin{figure}
\begin{center} \small
\begin{tabular}{@{}c@{\hspace{3em}}c@{}}
\begin{tikzpicture}[x=0.8cm,y=0.5cm]
\draw[bg] (-1.5,-4) rectangle (8.5,4.5);
\draw[|<->|] (0,3) to node[abovelabel] {$p$} (3,3);
\draw[|<->|] (3,3) to node[abovelabel] {${} \ge 1$} (4.5,3);
\draw[|<->|] (4.5,3) to node[abovelabel] {${} \ge p$} (6,3);
\node[leftlabel] at (0,2) {$s^{(1)}$};
\draw[num] (0,1.5) rectangle (4.5,2.5); 
  \node[anchor=west] at (0,2) {$\pm1$};
  \node at (1.5,2) {$\cdots$};
  \node at (3.75,2) {$10 \cdots 0$};
\node[leftlabel] at (0,0.5) {$s^{(2)}$};
\draw[num] (6,0) rectangle (8,1);
  \node[anchor=west] at (6,0.5) {$\pm 1$};
  \node at (7.25,0.5) {$\cdots$};
\draw (0,-0.5) -- (8,-0.5);
\node[leftlabel] at (0,-1.5) {$s$};
\draw[num] (0,-2) rectangle (3,-1);
  \node[anchor=west] at (0,-1.5) {$\pm1$};
  \node at (1.5,-1.5) {$\cdots$};
\draw[|<->|] (0,-2.5) to node[belowlabel] {$p$} (3,-2.5);
\end{tikzpicture}
&
\begin{tikzpicture}[x=0.8cm,y=0.5cm]
\draw[bg] (-1.5,-4) rectangle (6.5,4.5);
\draw[|<->|] (0,3) to node[abovelabel] {${} \ge 1$} (1.5,3);
\draw[|<->|] (1.5,3) to node[abovelabel] {${} \ge p$} (4,3);
\node[leftlabel] at (0,2) {$s^{(1)}$};
\draw[num] (0,1.5) rectangle (1.5,2.5); 
  \node[anchor=west] at (0,2) {$\pm1 \cdots$};
\node[leftlabel] at (0,0.5) {$s^{(2)}$};
\draw[num] (4,0) rectangle (6,1);
  \node[anchor=west] at (4,0.5) {$\pm 1$};
  \node at (5.25,0.5) {$\cdots$};
\draw (0,-0.5) -- (6,-0.5);
\node[leftlabel] at (0,-1.5) {$s$};
\draw[num] (0,-2) rectangle (3,-1);
  \node[anchor=west] at (0,-1.5) {$\pm1$};
  \node at (1.5,-1.5) {$\cdots$};
\draw[|<->|] (0,-2.5) to node[belowlabel] {$p$} (3,-2.5);
\end{tikzpicture}
\\[1ex]
Case 2 & Case 3
\end{tabular}
\end{center}
\caption{In Case 2, $s^{(1)}$ is a breakpoint, so the sum $s$ depends on the sign (and only the sign) of $s^{(2)}$. In Case 3, even if $m^{(1)}$ has only a single bit, the remaining block-sums do not affect the whole sum.}
\label{fig:sum_cases}
\end{figure}

  Case 1: If $m^{(1)} = 0$, then the whole sum is zero, and we are done.

  Case 2: If $s^{(1)}$ is a breakpoint, then we need to look at the remainder $r$ to see which way to round. Since $r < 2^{e^{(1)}}$ (\cref{eq:bound1}), it's enough to look at the sign of $r$, which is the sign of $m^{(2)}$.

  Case 3: Otherwise, $s^{(1)}$ is sufficiently far (on the number line) from a breakpoint that the addition of $r$ cannot change the result. Due to cancellation, $m^{(1)}$ could have fewer than $p$ bits, down to just one bit. So the distance to the nearest breakpoint could be as small as $2^{e^{(1)}-p}$. But $r < 2^{e^{(1)}-p}$ by \cref{eq:bound1}.
\end{proof}

\goodbreak

\begin{lemma} \label{thm:float_elementary}
Given a floating-point number $x$ with $O(\poly(n))$ bits, the following functions can be computed in $\TC^0$:
\begin{enumerate}[label=(\alph*),ref=\alph*]
\item $\sqrt{x}$, rounded to the nearest floating-point number
\item\label{thm:float_elementary_exp}  $\exp x$, with a relative error of at most $2^{-p}$.
\end{enumerate}
\end{lemma}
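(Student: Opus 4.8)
The plan is to treat the two functions separately, in each case reducing to the integer operations of \cref{thm:arithmetic} together with iterated addition of floats (\cref{thm:sum_polyn}) and the float operations of \cref{thm:float_operations}. For $\sqrt{x}$ with $x = \float{m}{e}$, I would first remove the exponent by parity: if $e$ is even then $\sqrt{x} = \sqrt{m}\cdot 2^{e/2}$, and if $e$ is odd I rewrite $x = (2m)\cdot 2^{e-1}$ so that $\sqrt{x} = \sqrt{2m}\cdot 2^{(e-1)/2}$. Thus it suffices to approximate $\sqrt{M}$ for an $O(\poly(n))$-bit integer $M$ to $O(\poly(n))$ bits of precision and then round.

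For the core of the square-root computation, note that Newton's method is unavailable: obtaining $p$ correct bits by quadratic convergence needs $\Theta(\log p) = \Theta(\log n)$ iterations, which would blow up the circuit depth. Instead I would compute $\sqrt{M}$ as a \emph{single} iterated sum. Write $M = w_0^2\,(1+\rho)$, where $w_0$ is a crude approximation of $\sqrt{M}$ obtained by table lookup on the top $O(1)$ bits of $M$ (a constant-size table, hence computable in $\AC^0 \subseteq \TC^0$), chosen so that $|\rho| \le \tfrac12$. Then $\sqrt{M} = w_0 \sum_{k \ge 0} \binom{1/2}{k}\rho^k$, and since $\bigl|\binom{1/2}{k}\rho^k\bigr| \le 2^{-k}$, truncating after $K = O(\poly(n))$ terms gives error at most $2^{-p}$. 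Each term is a binomial coefficient (a rational computable via iterated multiplication and division, \cref{thm:arithmetic}) times a power $\rho^k$ (iterated multiplication, \cref{thm:float_operations}), and the truncated sum is an iterated addition (\cref{thm:sum_polyn}); carrying $O(\poly(n))$ guard bits keeps the accumulated rounding error below $2^{-p}$.

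To round to the nearest $p$-bit float I would compute $\sqrt{M}$ to a few more than $p$ significant bits and read off the candidate. The only danger is a value pathologically close to a breakpoint; but a breakpoint $B$ is dyadic, so if $\sqrt{M} \ne B$ then $|M - B^2| \ge 2^{-O(\poly(n))}$ (as $M$ is an integer and $B^2$ is dyadic with an $O(\poly(n))$-bit denominator), whence $|\sqrt{M} - B| \ge 2^{-O(\poly(n))}$ and $O(\poly(n))$-bit precision determines the rounding direction. The exact-tie case $\sqrt{M} = B$ occurs only when $M$ (after the power of two split off above) is a perfect square, which is detected by the exact integer comparison $\lfloor\sqrt{M}\rfloor^2 = M$ and resolved by the even-significand tie-breaking rule of \cref{def:float}.

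For $\exp x$ with $x = \float{m}{e}$ of value $v$, I would use range reduction followed by a Taylor series. Writing $z = v\log_2 e$ (a float multiplication by the hardcoded constant $\log_2 e$), I split $z = I + f$ with $I = \lfloor z\rfloor$ and $f \in [0,1)$; then $\exp v = 2^I\cdot 2^f$, where $2^I$ merely sets the exponent and $2^f = \exp(f\ln 2)$ has bounded argument $f\ln 2 \in [0,\ln 2]$. This range reduction is exactly the step that controls the cost: because $\exp(w) = \sum_k w^k/k!$ with $|w| \le \ln 2$ has $k$-th term at most $(\ln 2)^k/k!$, only $K = O(\poly(n))$ terms are needed for relative error $2^{-p}$. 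Each term uses iterated multiplication for $w^k$ and $k!$ and one division (\cref{thm:float_operations}), and the truncated sum is an iterated addition (\cref{thm:sum_polyn}); multiplying by $2^I$ and rounding yields a $p$-bit float with relative error at most $2^{-p}$ once $O(\poly(n))$ guard bits are carried. When $I \notin [-2^p, 2^p)$ the true value overflows or underflows the representable range, which is detected in $\TC^0$ by comparing $I$ with $2^p$; in the softmax application the arguments of $\exp$ are made nonpositive by subtracting the maximum score, so overflow does not arise. The main obstacle in both parts is attaining the target precision in constant depth: since no iterative refinement is allowed, the entire computation must be cast as one iterated sum of polynomially many series terms, which forces a good enough initial approximation (for $\sqrt{}$) to guarantee geometric convergence, and the range reduction (for $\exp$) that keeps the argument, and hence the number of terms, polynomially bounded.
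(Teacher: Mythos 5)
Your proposal is correct and follows essentially the same route as the paper: range reduction to a bounded argument followed by a single truncated Taylor/binomial series with $O(\poly(n))$ terms evaluated via iterated addition and multiplication, with correct rounding of $\sqrt{x}$ decided by exactly comparing the square of a nearby breakpoint against the (rational) radicand. The one detail to repair is the ``hardcoded constant $\log_2 e$'': for $\DLOGTIME$-uniformity one cannot simply hardwire $O(\poly(n))$ bits of a transcendental constant into the circuit family, so --- as the paper does for $\log 2$ --- the constant should itself be computed in $\TC^0$ from a rapidly converging series.
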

\begin{proof}
The basic idea is to use a truncated Taylor series (Merrill, p.c.; \citealp[Cor.~6.5]{hesse-etal-2002-division}). This is not a new result, but we try to fill in some details here that are missing elsewhere. Let $p \in O(\poly(n))$.

For $\sqrt{x}$: Find $r \in [\tfrac14,1]$ and an even integer $k$ such that $x = r \cdot 2^k$, as follows. If $x = \float{m}{e}$ and $e+p$ is even, let $r = m \cdot 2^{-p} \in [\tfrac12,1)$ and $k = e+p$; if $e+p$ is odd, let $r = m \cdot 2^{-p-1} \in [\frac14,\frac12)$ and $k = e+p+1$.
Then compute $\sqrt{r}$ using the Taylor series about $1$:
\begin{align*}
\sqrt{r} 
&= \sum_{i=0}^{N-1} \binom{\tfrac12}{i} (r-1)^i + O(|r-1|^N).
\end{align*}
Since the error term is in $O(|r-1|^N)$ and $r \ge \frac14$, there is some $a$ such that the error is at most $a|r-1|^N \le a\left(\frac34\right)^N$. To make this less than $2^{-p-1}$, we set $N = \frac{(p+1) \log 2 +\log a}{-\log \frac34} \in O(p)$.
Then we decide which way to round by squaring the breakpoint nearest to the approximation of $\sqrt{r}$ and comparing it with $r$. Finally, $\sqrt{x} = \sqrt{r} \cdot 2^{k/2}$.

For $\exp x$: 
Let $k = \lfloor x / \log 2 \rfloor$ and $r = x - k \log 2$, where $\log 2$ is computed using the series:
\begin{align*}
\log 2 
&= \sum_{i=1}^{N-1} \frac{1}{i \cdot 2^i} + O(2^{-N}).
\end{align*}
Compute $\exp r$ using the Taylor series about~$0$:
\begin{align*}
    \exp r &= \sum_{i=0}^{N-1} \frac{r^i}{i!} + O(r^N).
\end{align*}
Since the error term is in $O(r^N)$ and $r \in [0,\log 2)$, 
there is some $a$ such that the relative error is at most $\frac{ar^N}{\exp r} \le a (\log 2)^N$. So to get a relative error of $2^{-p}$, we set $N = \frac{p \log 2 + \log a}{-\log \log 2} \in O(p)$.
Finally, $\exp x = (\exp r) \cdot 2^k$.
\end{proof}

\begin{theorem} \label{thm:smat_precision}
Any language that is recognizable by an $O(\poly(n))$-bit precision $\SMAT$ is 
in $\TC^0$.
\end{theorem}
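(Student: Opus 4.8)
The plan is to follow the proof of \cref{thm:ahat_fom}: I would show that an $O(\poly(n))$-bit $\SMAT$ produces its output by composing, a constant number of times, only floating-point operations already placed in $\TC^0$ by \cref{thm:float_operations,thm:sum_polyn,thm:float_elementary}, and then decide acceptance from the sign of the final scalar. I would fix the precision at some $p \in O(\poly(n))$ and catalogue the operations of a single layer. Since the model dimension $d$ is constant, each query/key/value projection and each attention score $s_{i,j}$ is a sum of $d$ products, handled by multiplication and (pairwise) addition (\cref{thm:float_operations}\ref{thm:float_basic}). The genuinely unbounded sums appear only in the softmax: the weights $\alpha_{i,j} = \exp s_{i,j} / \sum_{j'} \exp s_{i,j'}$ use $\exp$ (\cref{thm:float_elementary}\ref{thm:float_elementary_exp}), the $n$-way iterated addition of \cref{thm:sum_polyn} for the denominator, and division (\cref{thm:float_operations}\ref{thm:float_basic}); the attention output $\sum_j \alpha_{i,j} v_j$ is a pointwise multiplication followed by the same $n$-way iterated addition. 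The feedforward sublayer is two linear maps around a pointwise activation whose nonlinearity (e.g.\ $\mathrm{ReLU}$) is a comparison. If layer normalization is used, \cref{eq:layernorm} needs the mean and variance over the $d$ (constantly many) components, the square root of $\textnormal{Var} + c$ (\cref{thm:float_elementary}), and a division; here $c > 0$ keeps the radicand bounded away from $0$, so the square root and division are well-defined.

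With the per-layer primitives identified, I would then handle precision and composition. Every operation rounds its result back to $p$ bits, so each stored value is a $p$-bit float; the only wider quantities are transient, such as the $2p$-bit product of two significands or the exactly-summed significands inside \cref{thm:sum_polyn}, and these remain $O(\poly(n))$-bit and thus within the scope of the lemmas. The network has constant depth $L$ and constant width, so $T$ is a composition of $O(1)$ macro-operations, each a bounded-depth $\TC^0$ circuit, applied in parallel across the $n$ positions. By closure of $\TC^0$ under serial and parallel composition (exactly as in \cref{thm:ahat_fom}), the map $w \mapsto T(w)$ is in $\TC^0$, and recognition follows by inspecting the sign of the final significand.

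Since \cref{thm:float_operations,thm:sum_polyn,thm:float_elementary} do the heavy lifting, the only real work left is bookkeeping, which I expect to be the main (and mild) obstacle: I must verify that the catalogue of primitives is complete --- in particular that layer normalization contributes nothing beyond square root and the arithmetic primitives --- and that both exponents and transient significands stay $O(\poly(n))$-bit through all $L$ layers. Rounding after each step prevents the bit blow-up seen for exact $\AHAT$s, and $L$ is constant, so this is routine. I would also stress that no error analysis is needed here: the statement is about the floating-point $\SMAT$ as defined (including the $2^{-p}$ relative error of $\exp$), so I simulate that machine exactly rather than approximating a real-valued one; propagation of rounding error is a separate concern addressed in the absolute-error setting of \cref{thm:smat_approx}.
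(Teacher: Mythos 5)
Your proposal is correct and takes essentially the same route as the paper: the paper's proof of \cref{thm:smat_precision} is a two-sentence appeal to \cref{thm:float_operations,thm:sum_polyn,thm:float_elementary} plus closure of $\TC^0$ under composition, which is exactly your argument. Your additional bookkeeping (cataloguing the per-layer primitives, noting that only the softmax denominator and attention-weighted sum require $n$-ary iterated addition, and observing that per-operation rounding keeps all stored values at $p$ bits) is a more explicit version of what the paper leaves implicit.
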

\begin{proof}
$\SMAT$s use only the operations in \cref{thm:float_operations,thm:sum_polyn,thm:float_elementary}. Since these operations are all computable in $\TC^0$ and can be composed in $\TC^0$, the language recognized by an $O(\poly(n))$-bit precision $\SMAT$ is in $\TC^0$.
\end{proof}

\section{Approximating $\SMAT$s with $2^{-O(\poly(n))}$ error}
\label{sec:smat_approx}

Defining ``transformers with $p$-bit precision'' and characterizing the class of languages they recognize is complicated, because there are many different ways to perform rounding, which can lead to differences in expressive power \citep{li-etal-2024-empowers}.
In this section, we propose an alternative approach, which is to limit the error of the final result of a transformer approximation and abstract away from details (like precision and rounding) of how that level of error is achieved. We show that approximating a $\SMAT$ with absolute error at most $2^{-O(\poly(n))}$ can be done in $\TC^0$.

This has two advantages. First, it has a simple and unambiguous definition. Second, it will allow us to say something about the expressivity of a large subclass of exact $\SMAT$s, namely, those that accept or reject strings with margin $2^{-O(\poly(n))}$.

\begin{theorem} \label{thm:smat_approx}
For any $\SMAT$ $T \colon \Sigma^* \to \R$ and for any 
$\epsilon(n) \in 2^{-O(\poly(n))}$%
, there is a function $\hat{T} \colon \Sigma^* \to \R$ in $\TC^0$ such that for all $w \in \Sigma^*$ with $n = |w|$, $|\hat{T}(w) - T(w)| \le 
\epsilon(n)
$.
\end{theorem}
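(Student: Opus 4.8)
The plan is to take $\hat{T}$ to be the computation of $T$ in $p$-bit floating-point arithmetic, for a working precision $p \in O(\poly(n))$ to be fixed at the end, and then to bound the gap between this finite-precision computation and the exact real-valued transformer. The floating-point computation is in $\TC^0$ by exactly the operations analyzed in the previous section (\cref{thm:float_operations,thm:sum_polyn,thm:float_elementary}); since the number of parameters of $T$ is a constant independent of $n$, I would additionally replace each of them by a $p$-bit approximation, which contributes only $O(\poly(n))$ extra bits. So the only genuinely new content is a forward error analysis showing that, for $p$ large enough, the accumulated absolute error is at most $\epsilon(n)$.

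First I would bound the magnitude of every intermediate quantity. The input and position embeddings have $O(\poly(n))$ bits, hence magnitude at most $2^{O(\poly(n))}$. Each layer applies fixed affine maps (of bounded operator norm), componentwise nonlinearities, and attention, which outputs a convex combination of value vectors and so increases magnitude by at most a constant factor; over the constant depth $L$ this yields a uniform bound $M = 2^{O(\poly(n))}$ on all intermediate values, and likewise on the attention scores, which are dot products of bounded vectors. Consequently $\exp$ is only ever evaluated on arguments of magnitude $2^{O(\poly(n))}$, so its outputs fit in a $p$-bit float with an $O(\poly(n))$-bit exponent.

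The crux is to rule out catastrophic error amplification, which for floating point can only come from division by a quantity near zero. The transformer divides in exactly two places, and both denominators are bounded below. For the softmax I would compute it in the numerically stable form $\exp(s_{i,j}-\max_{j'} s_{i,j'}) \big/ \sum_{j'} \exp(s_{i,j'}-\max_{j'} s_{i,j'})$, which equals the exact softmax and whose denominator is at least $\exp(0)=1$; for layer normalization the denominator $\sqrt{\textnormal{Var}+c}$ is at least $\sqrt{c}>0$, which is precisely why the paper requires $c>0$. With both denominators bounded below by the positive constant $\min(1,\sqrt{c})$, division amplifies error by only a constant factor, and each macro-step (a rounded arithmetic operation, an iterated sum of $n$ terms, an $\exp$, or a $\sqrt{\cdot}$) amplifies the incoming absolute error by at most $2^{O(\poly(n))}$ (from multiplying by magnitudes up to $M$ or adding up to $n$ terms) and introduces fresh error at most $M \cdot 2^{-p}$. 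Over the constant-depth computation DAG the total error is therefore at most $2^{O(\poly(n))} \cdot 2^{-p}$.

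The step I expect to be the main obstacle is handling $\exp$ correctly, because $\exp$ converts an additive error $\delta$ in its argument into a multiplicative factor $e^{\delta}$ in its output, on top of its own guaranteed relative error $2^{-p}$. I would therefore track the absolute error in the attention scores, noting that the argument $s_{i,j}-\max_{j'} s_{i,j'}$ has absolute error at most the sum of the two score errors, hence at most $2^{O(\poly(n))} \cdot 2^{-p}$, and choose $p$ large enough that this is well below $1$, so that $e^{\delta} = 1 + O(\delta)$ and the $\exp$ outputs inherit a relative error of $2^{-O(\poly(n))}$; dividing these (with denominator at least $1$) keeps the softmax weights accurate to $2^{-O(\poly(n))}$, and since they lie in $[0,1]$ their absolute error is controlled too. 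Finally, since every bound above has the shape $2^{O(\poly(n))} \cdot 2^{-p}$ and the target is $\epsilon(n) = 2^{-O(\poly(n))}$, I would set $p$ to be the sum of the two $\poly(n)$ exponents plus a constant, which is still $O(\poly(n))$; the resulting $p$-bit computation is the desired $\hat{T} \in \TC^0$.
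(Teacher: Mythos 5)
Your proposal is correct in its essentials and follows the same high-level strategy as the paper --- decompose the transformer into $\TC^0$-computable primitives, propagate errors forward through the constant-depth computation DAG, and fix the working precision $p \in O(\poly(n))$ at the end --- but the technical execution differs in three ways worth noting. First, the paper does \emph{not} round after every operation: it carries out the whole computation in exact $O(\poly(n))$-bit rational arithmetic (\cref{thm:rational_operations}), so error enters only through the two transcendental primitives $\sqrt{x}$ and $\exp x$ (\cref{thm:float_elementary}); this collapses your per-step ``fresh error $M \cdot 2^{-p}$'' bookkeeping to just two error sources and removes the need to reason about rounding at all. Second, where you bound intermediate magnitudes by $2^{O(\poly(n))}$ and accept per-step amplification factors of that size, the paper invokes \citet{hahn-2020-theoretical} for a \emph{constant} bound $C$ on all activations (and the layer-norm constant $c>0$ for the denominator), so that every primitive except iterated addition is $\rho$-Lipschitz with $\rho$ independent of $n$, and iterated addition is only $n$-Lipschitz; this yields $\delta \in \Omega(\epsilon/n)$ per step rather than $\Omega(\epsilon/2^{\poly(n)})$. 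Your cruder bounds still suffice for the stated theorem, since the target error is $2^{-O(\poly(n))}$ and the DAG depth is constant, but the paper's sharper analysis is what supports its closing observation that $O(\log n)$ bits already achieve error $1/O(\poly(n))$ --- a conclusion your version of the argument would not deliver. Third, you stabilize the softmax by subtracting the row maximum so that the denominator is at least $1$ and the arguments to $\exp$ are nonpositive, whereas the paper bounds the softmax error directly via a relative-error computation on the unshifted exponentials. Your max-subtraction is not cosmetic in your setting: without it, the derivative of $\exp$ at an argument of magnitude $2^{O(\poly(n))}$ would be doubly exponential and your amplification bound would collapse, so you are right to identify $\exp$ as the crux.
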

\begin{proof}
We construct $\hat{T}$ out of the following operations, where $C > 0$ and $c > 0$ do not depend on $n$:
\begin{enumerate}[label=(\alph*),ref=\alph*]
\item\label{item:add2} Addition of two numbers
\item\label{item:mul} Multiplication $xy$ where $|x|, |y| \le C$
\item\label{item:comp} Comparison of two numbers
\item\label{item:sqrt} Inverse square root $\frac1{\sqrt{x}}$ where $|x| \ge c$
\item\label{item:addn} Iterated addition of $n$ numbers
\item\label{item:softmax} Softmax of $n$ numbers.
\end{enumerate}
The upper bound $C$ on all activations was shown by \citet{hahn-2020-theoretical}, and in operation (\labelcref{item:sqrt}), the lower bound~$c$ exists because we defined layer normalization to add a constant to the variance (\cref{eq:layernorm}).

To simplify the error analysis, all of the above operations are performed on $O(\poly(n))$-bit rational numbers. In $\TC^0$, all of these operations can be computed exactly (\cref{thm:rational_operations}), except $\sqrt{x}$ and $\exp x$, which can be approximated with relative error $\epsilon$ for any $\epsilon \in 2^{-O(\poly(n))}$, by \cref{thm:float_elementary}.
In that \namecref{thm:float_elementary}, the case for square root asks for $r \in [\tfrac14,1]$ and an even integer $k$ such that $x = r \cdot 2^k$. We do this as follows. If $a \ge b$, compute $\lfloor\frac{a}{b}\rfloor$ using truncated division (\cref{thm:arithmetic}\ref{thm:arithmetic_division}), then count the number of bits (\cref{thm:arithmetic}\ref{thm:arithmetic_log}) to get $k = \lfloor \log_2 \lfloor \frac{a}{b}\rfloor\rfloor + 1$. 
Similarly, if $a < b$, compute $k = -\lfloor \log_2 \lfloor\frac{b}{a}\rfloor\rfloor+1$.
Finally, if $k$ is odd, increment it by $1$.

Fix $\epsilon_{\text{final}} > 0$.
We show by induction that, for each operation $i$ in the computation of $\hat{T}$, there is a $\delta_i \in \Theta(\epsilon/\poly(n))$ such that if we compute operation $i$ with error $\delta_i$, then the final answer has error $\epsilon_{\text{final}}$. In particular, it is possible to compute $\hat{T}$ using $O(\poly(n))$-bit rationals and achieve a final error of at most $2^{-O(\poly(n))}$.

For each operation, we will show that for any $\epsilon > 0$, there is a $\delta \in \Omega(\epsilon / n)$ such that if the inputs to the operation are approximated with error $\delta$, then the output is approximated with error $\epsilon$.

If a function $f \colon \R^d \to \R$ is $\rho$-Lipschitz continuous, then for any $\epsilon > 0$, if $\|\mathbf{\abserr}\| \le \epsilon/\rho$, then $|f(\mathbf{x} + \mathbf{\abserr}) - f(\mathbf{x})| \le \rho\,\|\mathbf{\abserr}\| \le \epsilon$. Operations (\labelcref{item:add2,item:mul,item:comp,item:sqrt}) are $\rho$-Lipschitz continuous with $\rho$ not depending on $n$, while iterated addition of $n$ numbers (\labelcref{item:addn}) is $n$-Lipschitz continuous, and softmax of $n$ numbers (\labelcref{item:softmax}) is $\rho$-Lipschitz continuous with $\rho$ not depending on $n$.

We show the cases of inverse square root and softmax, as these also have error due to the Taylor approximations.

For inverse square root $y = \frac{1}{\sqrt{x}}$ for $x \ge c$:
For any $\epsilon > 0$, let $\delta = \min \left( \frac{c}2, \frac{c\sqrt{c}}{(2c+1)\sqrt{2}}\epsilon \right)$.
Suppose that $x$ has been approximated as $\hat{x} = x + \abserr$ where $|\abserr| \le \delta$. Because $\sqrt{x}$ for $x \ge c - \abserr \ge \frac{c}2$ is $\frac{1}{\sqrt{c}}$-Lipschitz continuous, we have $|\sqrt{\hat{x}} - \sqrt{x}| \le \frac{\delta}{\sqrt{c}}$. Furthermore, we approximate $\sqrt{\hat{x}}$ with relative error $\relerr$ where $|\relerr|\le\delta$. So we approximate $y$ as $\hat{y} = \frac{1}{\sqrt{\hat{x}} (1+\relerr)}$,
and the error is
\begin{align*}
  |\hat{y}-y|
  &= \left| \frac{1}{\sqrt{\hat{x}} (1+\relerr)} - \frac{1}{\sqrt{x}} \right| \\
  &\le \left| \frac{1}{\sqrt{\hat{x}} (1+\relerr)} - \frac{1}{\sqrt{\hat{x}}}\right| + \left|\frac{1}{\sqrt{\hat{x}}} - \frac{1}{\sqrt{x}} \right| && \text{triangle inequality} \\
  &= \left| \frac{\relerr}{\sqrt{\hat{x}} (1+\relerr)} \right| + \left| \frac{\sqrt{x} - \sqrt{\hat{x}}}{\sqrt{\hat{x} x}} \right| \\
  &\le \frac{\delta}{\sqrt{\frac{c}2} \cdot \frac12} + \frac{\frac{\delta}{\sqrt{c}}}{\sqrt{\frac{c}2 \cdot c}} && \eta \le \delta, \hat{x} \ge c, \hat{x} \ge \tfrac{c}2 \\
  &= \frac{(2c+1)\sqrt{2}}{c \sqrt{c}}\delta \\
  &\le \epsilon.
\end{align*}

For softmax of $n$ numbers:
For any $\epsilon > 0$, let $\delta = \min\left(\frac12, \frac{\epsilon}{16}\right)$.
Suppose that for all $i \in [n]$, $x_i$ has been approximated as $x_i + \abserr_i$ where $|\abserr_i| \le \delta$, and let~$\relerr_i$ where $|\relerr_i|\le\delta$ be the relative error of approximating $\exp (x_i + \abserr_i)$. Then the softmax and its approximation are
\begin{align*}
y_i &= \frac{\exp x_i}{\sum_j \exp x_j} \\
\hat{y}_i &= \frac{(\exp (x_i + \abserr_i)) (1+\relerr_i)}{\sum_j (\exp (x_j + \abserr_j)) (1+\relerr_j)}
\end{align*}
and $\hat{y}$ overestimates $y$ by at most
\begin{align*}
\hat{y}_i - y_i
&\le \frac{(\exp (x_i + \delta))(1+\delta)}{\sum_j (\exp (x_j - \delta))(1-\delta)} - y_i \\ 
&= \left(\frac{(\exp 2\delta) (1+\delta)}{1-\delta} - 1\right) y_i \\
&\le \frac{(\exp 2\delta) (1+\delta)}{1-\delta} - 1 && y_i \le 1 \\
&\le \frac{(1+4\delta) (1+\delta)}{1-\delta} - 1 && 2\delta \in [0,1] \Rightarrow \exp 2\delta \le 1 + 4\delta \\
&\le \frac{8\delta}{1-\delta} && \delta \le \tfrac12 \\
&\le 16\delta && \delta \le \tfrac12 \\
& \le \epsilon && \delta \le \tfrac{\epsilon}{16}.
\end{align*}
Similarly, we can show that $\hat{y}$ underestimates $y$ by at most
\begin{align*}
y_i - \hat{y}_i
&\le 8\delta \le \epsilon. \tag*{\qedhere}
\end{align*}

\end{proof}

The above is a statement about the expressivity of  $\SMAT$ approximations, but as mentioned at the beginning of this section, it also makes it possible to say something about the expressivity of a large subclass of exact $\SMAT$s.
\begin{definition}
A transformer $T \colon \Sigma^* \to \R$ recognizes a language $L$ %
with margin $\epsilon(n)$
if, for every string $w \in \Sigma^*$ with $n = |w|$, if $w \in L$ then $T(w) > \epsilon(n)$, and if $w \not\in L$ then $T(w) < -\epsilon(n)$.
\end{definition}
\begin{corollary} \label{thm:smat_requiring_accuracy}
Any language that is recognizable by a $\SMAT$ 
with margin $2^{-O(\poly(n))}$
is in $\TC^0$.
\end{corollary}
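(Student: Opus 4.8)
The plan is to derive this directly from \cref{thm:smat_approx} by observing that the approximation error it guarantees can be driven safely below the recognition margin. Suppose $L$ is recognized by a $\SMAT$ $T$ with margin $\epsilon(n) \in 2^{-O(\poly(n))}$; that is, $T(w) > \epsilon(n)$ whenever $w \in L$ and $T(w) < -\epsilon(n)$ whenever $w \notin L$. Since $\epsilon(n)/2$ is also in $2^{-O(\poly(n))}$, I would apply \cref{thm:smat_approx} with target error $\epsilon(n)/2$ to obtain a function $\hat{T} \colon \Sigma^* \to \R$ in $\TC^0$ satisfying $|\hat{T}(w) - T(w)| \le \epsilon(n)/2$ for all $w$.

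It then remains to check that $\hat{T}$ recognizes $L$ in the (zero-threshold) sense of the first definition. If $w \in L$, then $\hat{T}(w) \ge T(w) - \epsilon(n)/2 > \epsilon(n) - \epsilon(n)/2 = \epsilon(n)/2 > 0$; symmetrically, if $w \notin L$, then $\hat{T}(w) \le T(w) + \epsilon(n)/2 < -\epsilon(n) + \epsilon(n)/2 = -\epsilon(n)/2 < 0$. Thus the sign of $\hat{T}(w)$ decides membership in $L$. Since $\hat{T} \in \TC^0$ and comparing its output against $0$ is a single comparison of $O(\poly(n))$-bit rationals (in $\TC^0$ by \cref{thm:rational_operations}), the language $L$ is decided in $\TC^0$, and we conclude $L \in \TC^0$.

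I do not expect a genuine obstacle here: all the substantive work — the Lipschitz-based propagation of error through the transformer's operations and the $\TC^0$-computability of each operation — has already been carried out in the proof of \cref{thm:smat_approx}. The only points to verify are that halving the margin keeps us within the $2^{-O(\poly(n))}$ regime, which is immediate, and that a strict inequality with slack $\epsilon(n)/2$ survives the approximation, which is exactly the two-line calculation above. If anything merits care, it is simply being explicit that "recognizes with margin $\epsilon(n)$" is the hypothesis while "recognizes" in the zero-threshold sense is the conclusion we transfer to $\hat{T}$.
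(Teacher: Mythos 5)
Your proposal is correct and follows essentially the same route as the paper: apply \cref{thm:smat_approx} and observe that the approximation error cannot flip the sign of an output that clears the margin. The only cosmetic difference is that you request error $\epsilon(n)/2$ where the paper uses $\epsilon(n)$ directly (relying on the strictness of the margin inequality to get $\hat{T}(w) \ge T(w) - \epsilon > 0$); both work.
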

\begin{proof}
Let $L$ be a language recognized by $\SMAT$ $T$ with margin $\epsilon \in 2^{-O(\poly(n))}$. By \cref{thm:smat_approx}, there is a function $\hat{T}$ in uniform $\TC^0$ such that for all $w$, we have $-\epsilon \le \hat{T}(w)-T(w)\le\epsilon$. If $w \in L$, then $T(w) > \epsilon$, so 
$\hat{T}(w) \ge T(w)-\epsilon > 0$.
Similarly, if $w \not\in L$, then $T(w) < -\epsilon$, so $\hat{T}(w) \le T(w) + \epsilon < 0$. Therefore, $\hat{T}$ also recognizes~$L$.
\end{proof}

\section{Limitations and Conclusions}

The levels of precision considered here go far beyond what is practical to compute with. Nevertheless, these results are valuable because they further strengthen the case that transformers cannot compute any function outside of $\TC^0$. 

Moreover, \cref{sec:smat_approx} offers an alternative approach to limited-precision transformers that may be useful in more realistic settings. In particular, an analogous argument shows that it takes $O(\log n)$ bits of precision to achieve an error of $1/O(\poly(n))$, which may make $\SMAT$s with margin $1/O(\poly(n))$ an interesting target for future research.

\bibliography{references}

\end{document}